\documentclass[12pt,letterpaper]{article}
\usepackage{graphicx}
\usepackage{mathrsfs}
\usepackage{dsfont}
\usepackage{units}
\usepackage{amsmath, amsthm, amssymb, amsfonts, enumerate}
\usepackage[round,authoryear]{natbib}
\usepackage{color}
\usepackage{caption}
\usepackage{subcaption}
\usepackage{setspace}
\usepackage{mdframed}
\usepackage{hyperref}
\usepackage{float}
\usepackage{multirow}
\usepackage[normalem]{ulem}
\usepackage{placeins}
\usepackage{tikz}
\usepackage[T1]{fontenc}
\usepackage[utf8]{inputenc}
\usepackage{lmodern}
\usepackage[letterpaper,margin=1.25in]{geometry}
\usepackage{microtype}
\usepackage{needspace}
\hypersetup{hidelinks,pdftitle={Comparison of Deterministic Information Providers},
 pdfauthor={David Lagziel, Ehud Lehrer, Tao Wang}}
\usetikzlibrary{calc}

\definecolor{blue}{rgb}{0.0, 0.0, 1}
\definecolor{green2}{rgb}{0.0, 0.52, 0.24}
\definecolor{cadmiumgreen}{rgb}{0.0, 0.42, 0.24}
\definecolor{camouflagegreen}{rgb}{0.47, 0.53, 0.42}
\definecolor{darkolivegreen}{rgb}{0.33, 0.42, 0.18}
\definecolor{darkpastelgreen}{rgb}{0.01, 0.75, 0.24}
\definecolor{darkspringgreen}{rgb}{0.09, 0.45, 0.27}

\usepackage{comment}

\def\o{{\omega}}

\def\1{\textbf{1}}

\newtheorem{theorem}{Theorem}
\newtheorem{proposition}{Proposition}
\newtheorem{corollary}{Corollary}
\newtheorem{example}{Example}
\newtheorem{definition}{Definition}

\newtheorem{obs}{Observation}

\begin{document}

\title{Comparison of Deterministic Information Providers\thanks{For their valuable comments, the authors wish to thank participants of the Durham University Economics Seminar, the Adam Smith Business School Micro theory seminar of Glasgow University, INSEAD EPS seminar, The School of Economics seminar at the University of Edinburgh, Western University Economics seminar, the Tel-Aviv University Game Theory Seminar, the Rationality Center Game Theory Seminar, the Technion Game Theory Seminar, the Bar-Ilan University Theoretical Economics Seminar, the Bar-Ilan University Management Seminar and the BGU Economics seminar.
Lagziel acknowledges the support of the Israel Science Foundation, Grant \#2074/23. Lehrer acknowledges the support of the Deutsche Forschungsgemeinschaft (DFG, German Research Foundation), Project Number 461570745. Wang acknowledges the support of National Natural Science Foundation of China \#72303161.}}
\author{David Lagziel\thanks{\raggedright Department of Economics, Ben-Gurion University of the Negev, Beer-Sheba 8410501, Israel.  E-mail: \textsf{Davidlag@bgu.ac.il}.} \\
{\small Ben-Gurion University}
\and
Ehud Lehrer\thanks{\raggedright Economics Department, Durham University, Durham DH1 3LB, UK.  E-mail: \textsf{ehud.m.lehrer@durham.ac.uk}.} \\ {\small Durham University}
\and
Tao Wang\thanks{\raggedright International School of Economics and Management, Capital University of Economics and Business, Beijing 100070, China.  E-mail: \textsf{tao.wang.nau@hotmail.com}.}  \\
{\small CUEB}}
\begingroup
\singlespacing
\maketitle
\endgroup

\thispagestyle{empty}

\begin{abstract}
\singlespacing{
We analyze incomplete-information games where an oracle publicly shares information with players. One oracle dominates another if, in every game, it can match the set of equilibrium outcomes induced by the latter. Characterizations are provided for deterministic signaling functions, based on simultaneous posterior matching, a constructive partition criterion, and common knowledge components. This study elaborates on the work of \citet{Blackwell1951} in games with incomplete information, using the common knowledge components of \citet{Aumann1976}.}
\end{abstract}

\bigskip
\noindent {\emph{Journal of Economic Literature} classification numbers: C72, D82, D83.}

\bigskip
\noindent Keywords: oracle; information dominance; signaling function; common knowledge component.

\newpage
\setcounter{page}{1}
\section{Introduction} \label{Section - Intro}

The original motivation for this paper came from the seemingly distant domain of monetary policy, specifically from a public debate surrounding a Federal Open Market Committee press release following the spike in inflation in the aftermath of the COVID pandemic. Commentators and market participants scrutinized every word, tracking deviations from previous statements to extract relevant information. Seeing this debate unfold raised a fundamental economic question: what are the precise strategic implications of a partially informed external agent providing public signals to a market of players who themselves possess heterogeneous, private, and partial information?

This conceptual framework of publicly revealing partial information extends far beyond central banking. It is analogous to the signals provided by various forecasters, ranging from weather and sports to geopolitics, alongside news media organizations, rating agencies, and prediction markets. In each of these settings, an external observer publicly shares incomplete information with a set of strategically interacting agents, thereby reshaping their hierarchies of beliefs and altering the underlying game.

Following \citet{Lagziel2026}, this paper provides a formal framework for analyzing these interactions through incomplete-information games. We assume that players are partially informed about the realized state of the world through their own private partitions. Additional public information is disclosed by an external information provider, which we refer to as an \emph{oracle}. The oracle is endowed with a fixed partition of the state space, representing its informational capability, and communicates via a \textit{signaling function} that is measurable with respect to this partition. Crucially, the oracle may be unaware of the players' private information or what constitutes common knowledge among them. The combination of the original Bayesian game, defined by the players' private information, action sets, and payoff functions, and the additional public information provided by the oracle constitutes what we refer to as a \textit{guided game}.

Unlike a standard player, the oracle acts as a generator of Blackwell experiments. Our primary objective is to establish a preorder of oracles based on their ability to induce equilibria across all possible games. We define \emph{dominance}: one oracle dominates another if for every signaling function of the latter, there exists a signaling function of the former such that for every game $G$ the sets of equilibrium distributions over state-action profiles in the corresponding guided games exactly coincide. The comparison does not impose an equilibrium-selection rule or an objective function for the oracle.

Departing from the original analysis of \cite{Lagziel2026} and \cite{Lagziel2025f}, in this paper we focus strictly on deterministic oracles, where the signaling function maps information to a public signal without randomization. In this  setting, we show that one oracle dominates another if and only if it can replicate the joint posterior beliefs induced by the other oracle across \emph{all players simultaneously}, adjusting for the redundancies created by the players' private information (Theorem~\ref{Theorem - dominance = informativeness}). We formalize this condition as being \emph{Jointly More Informative} (JMI). The JMI condition captures the fact that a single public signal is interpreted differently by different players, therefore the informational contribution of the oracle must be evaluated relative to the heterogeneous prior knowledge of each individual.

While intuitive, the JMI condition represents a fundamental departure from standard partition refinement. The two notions do not coincide: though refinement yields the JMI condition, the reverse derivation does not hold in general.  Nevertheless, we establish a rigid equivalence under two-sided dominance: if the state space consists of a single common knowledge component (the minimal commonly known event; see \citealp{Aumann1976}), and two deterministic oracles dominate each other under the JMI condition, their partitions must be identical (see Theorem~\ref{Theorem: dual JMI implies equivalence}).

We also give a constructive criterion for matching a given signaling function. The criterion identifies the states that Oracle~1 must assign to the same signal, and determines whether these identifications are consistent with every player's target information. To compare the two oracles, it suffices to check the binary signaling functions of Oracle~2.

\subsection{Relation to literature}

This research extends the classical framework established by \citet{Blackwell1951} and \citet{Blackwell1953}, which introduced the standard for comparing experiments in single-agent decision problems. In Blackwell's framework, an experiment dominates another if it yields a weakly higher expected utility for the decision maker across all decision problems. When transitioning to a multi-agent strategic environment, our notion of dominance requires the replication of equilibrium outcome distributions rather than the simple maximization of a single utility function.

Recent advancements have generalized Blackwell's model. Notably, \citet{Brooks2024} compare information sources that are robust to any external information and decision problem. There are critical distinctions between their framework and ours. First, our analysis applies to multi-player games rather than single-agent problems. Second, their characterization is entirely independent of the decision maker's private information. In contrast, our model fixes the players' private information structures while allowing the underlying game payoffs to vary. Consequently, our characterizations are tailored to the specific configuration of the players' prior knowledge. Furthermore, rather than evaluating fixed signals, we treat the oracle as an active generator of experiments constrained by its partition.

This project also parallels the extensive literature on Bayesian persuasion \citep{Kamenica2011, Kamenica2019}. Related work studies information provision, disclosure, and communication in a range of strategic settings \citep{Horner2016, Renault2013, Ganglmair2014, Renault2017, Ely2017, Ely2020, Che2018a, Bizzotto2021, Zhao2024}. Unlike a standard sender, our oracle lacks a specific objective or payoff function in the underlying game. By isolating the informational power of the oracle from any strategic preferences, we provide a pure assessment of informational dominance.

Finally, our framework relates to the study of external mediators in incomplete-information games \citep{Forges1993, Gossner2000}. However, existing literature on mediation frequently focuses on correlating players' actions or generating specific correlated equilibria. Information structures have also been ordered in specific classes of games, such as zero-sum games \citep{Peski2008} or common-interest games \citep{Lehrer2010}. Outcome equivalence in more general games is studied by \citet{Lehrer2013}. \citet{Bergemann2016} compare information structures through individual sufficiency and the sets of Bayes correlated equilibria they induce. These comparisons concern fixed information structures; we instead compare the public experiments available to the oracles. Our approach diverges by restricting the oracle to public signals, precluding private recommendations, while permitting unrestricted variations in the underlying game's payoff structure.
\section{The model} \label{Section - Model}

A \emph{guided game} comprises a Bayesian game and an \emph{oracle}. The oracle's role is to provide information that enables a different, possibly broader, range of equilibria.

We begin by defining the underlying Bayesian game. Let \( N = \{1, 2, \dots, n\} \) be a finite set of \( n \geq 2 \) players, and let \( \Omega \) denote a non-empty, finite state space. Each player \( i \in N \) has a non-empty, finite set of actions \( A_i \) and a partition \( \Pi_i \) over \( \Omega \), representing the information available to player \( i \). Denote the set of action profiles by \( A = \times_{i \in N} A_i \). The utility function for each player \( i \in N \) is \( u_i: \Omega \times A \to \mathbb{R} \), which maps states and action profiles to real-valued payoffs.

To extend the basic game into a guided game, we introduce an oracle who provides public information before players choose their actions. The oracle is endowed with a partition $F$ of the state space $\Omega$, and a countably infinite set $S$ of possible signals.

A \textit{deterministic signaling strategy} is a function $\tau:F\to S$, assigning a single signal to each element of the partition. Note that any deterministic signaling strategy is effectively equivalent to a coarsening of $F$, and we will refer to it as such when appropriate. For any partition $F'$, let $F'(\omega)$ denote its atom containing $\omega$, and write $\tau(\omega)$ for $\tau(F(\omega))$.

The guided game evolves as follows. First, the oracle publicly announces a strategy $\tau$. Then, a state $\omega \in \Omega$ is drawn according to a common prior $\mu \in \Delta (\Omega)$ with full support.\footnote{If the prior does not have full support, we restrict the state space and all information partitions to its support. The posterior-matching requirements then apply on this restricted state space.} Each player $i$ is privately informed of $\Pi_i(\omega)$, which is a set of states containing $\omega$ and also an atom of player $i$'s private partition. Finally, the signal $\tau(\omega) \in S$ is publicly announced, and the players choose their actions. The prior and the information partitions are common knowledge among the players.\footnote{Following \citet{Aumann1976}, a \emph{common knowledge component} (CKC) is an atom of the meet $\bigwedge_{i\in N}\Pi_i$, the finest common coarsening of the players' partitions. Equivalently, it is a minimal event, with respect to set inclusion, that is common knowledge among the players.}

Let the join $\Pi_i \vee F'$ denote the coarsest common refinement of $\Pi_i$ and $F'$, namely the updated information (i.e., partition) of player $i$. Its atoms are the non-empty intersections of their atoms. The meet of partitions is their finest common coarsening. Let $\mu^i_{\tau|\o}  = \mu (\cdot |[\Pi_i \vee \tau](\o) )  \in \Delta(\Omega)$  denote player $i$'s posterior belief after observing $\Pi_i(\omega)$ and $\tau(\omega)$. Every strategy $\tau$ yields an incomplete-information game $G(\tau)$, determined by the prior, the updated partitions $(\Pi_i\vee\tau)_{i\in N}$, the action sets, and the utility functions. Since the state space and the action sets are finite, the equilibria of the game exist. When there is no risk of ambiguity, we denote the incomplete-information game without \( \tau \) by \( G \).

\subsection{A notion of dominance}

To discuss the role of the oracle in the current framework, one needs a relevant solution concept. Thus, let us define the following notion of a Guided equilibrium, which incorporates the oracle's strategy. Formally, let $\sigma_i: \Pi_i\times S\rightarrow \Delta (A_i)$ be a strategy of player $i$. A tuple $(\tau,\sigma_1,\dots,\sigma_n)$ is a \emph{Guided equilibrium} if $(\sigma_1,\dots,\sigma_n)$ is a Nash equilibrium in the incomplete-information game $G(\tau)$.

The notion of a Guided equilibrium defines a preorder of oracles, i.e., a reflexive and transitive relation over their partitions according to the sets of equilibria. To define this relation, let $\rm{D}(G(\tau)) \subseteq \Delta(\Omega \times A)$ be the set of distributions over $\Omega \times A$ induced by Nash equilibria given $G$ and $\tau$. Now consider two oracles, Oracle $1$ and Oracle $2$, and denote the generic partition and strategy of Oracle $j$ by $F_j$ and $\tau_j$, respectively. Throughout the comparison, $N$, $\Omega$, $\mu$, and the players' partitions are fixed, while $G$ varies over all finite action sets and utility functions. Using these notations we define the ordering of oracles as follows.

\begin{definition}[Dominance of deterministic oracles] \label{Definition - Strategic Dominance}
    Fix the players' information structures. We say that \emph{Oracle $1$ deterministically dominates Oracle $2$}, denoted $F_1 \succeq_{\rm{D}} F_2$, if for every deterministic signaling strategy $\tau_2$, there exists a deterministic signaling strategy $\tau_1$ such that for every game $G$, $\rm{D}(G(\tau_1)) = \rm{D}(G(\tau_2))$.
\end{definition}

Two oracles are \emph{equivalent} if each dominates the other. The dominance relation is a partial order on these equivalence classes. This is an analyst's comparison of the available experiments; implementing a particular matching rule may require knowledge of the players' partitions.

\section{Ordering of deterministic oracles} \label{Section - Partial ordering of deterministic oracles}

Our first main result characterizes the notion of dominance among oracles, assuming they are restricted to deterministic strategies.

The characterization is based on the ability of one oracle to \emph{match} the players' joint posterior beliefs, for any given strategy of the other oracle. More formally, we say that Oracle $1$ is \emph{jointly More Informative} than Oracle $2$, if, for every strategy $\tau_2$, there exists a strategy $\tau_1$ that simultaneously matches the posterior partition of every player $i$.

\begin{definition} \label{Definition - jointly More Informative}
        \emph{Oracle $1$ is jointly More Informative  (JMI) than Oracle $2$}, denoted $F_1 \succeq_{(\mu^i)_i} F_2$, if for every deterministic $\tau_2$, there exists a deterministic $\tau_1$ such that $\Pi_i \vee \tau_1 = \Pi_i \vee \tau_2$ for every player $i$.
\end{definition}

\begin{obs} \label{Observation - JMI through the premitives}
    \emph{Oracle $1$} is jointly More Informative (JMI) than \emph{Oracle $2$} if and only if for every coarsening $F_2'$ of $F_2$, there exists a coarsening $F_1'$ of $F_1$ such that $\Pi_i \vee F_1' = \Pi_i \vee F_2'$, for every player $i$.
\end{obs}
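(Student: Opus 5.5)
The plan is to establish a one-to-one correspondence between deterministic signaling strategies $\tau_j:F_j\to S$, viewed up to relabeling of signals, and coarsenings of $F_j$. Once this is in place, the observation is a restatement of Definition~\ref{Definition - jointly More Informative}. The key point is that the updated partition $\Pi_i\vee\tau_j$ depends only on the partition of $\Omega$ induced by $\tau_j$, namely the level sets $\{\omega:\tau_j(\omega)=s\}$ for $s$ in the range of $\tau_j$, and not on the names of the signals.

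\textbf{Step 1: strategies induce coarsenings.} Given $\tau_j$, define $F_j^{\tau_j}$ as the partition of $\Omega$ into the non-empty sets $\tau_j^{-1}(s)=\bigcup\{f\in F_j:\tau_j(f)=s\}$. Each such set is a union of atoms of $F_j$, so $F_j^{\tau_j}$ is a coarsening of $F_j$. Moreover, for every $\omega$, the atom $[\Pi_i\vee\tau_j](\omega)$ equals $\Pi_i(\omega)\cap\{\omega':\tau_j(\omega')=\tau_j(\omega)\}=\Pi_i(\omega)\cap F_j^{\tau_j}(\omega)$. Hence $\Pi_i\vee\tau_j=\Pi_i\vee F_j^{\tau_j}$ as partitions.

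\textbf{Step 2: coarsenings are induced by strategies.} Conversely, let $F_j'$ be a coarsening of $F_j$. Since $\Omega$ is finite, $F_j'$ has finitely many atoms, and $S$ is countably infinite, so we can choose an injection $\iota:F_j'\to S$. Set $\tau_j(f)=\iota(F_j'(f))$ for $f\in F_j$. This is well defined because each $f\in F_j$ lies in a unique atom of $F_j'$. Injectivity of $\iota$ gives $F_j^{\tau_j}=F_j'$.

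\textbf{Step 3: both directions.} For ($\Rightarrow$), take any coarsening $F_2'$ and realize it as $F_2^{\tau_2}$ by Step 2. JMI then yields a strategy $\tau_1$ with $\Pi_i\vee\tau_1=\Pi_i\vee\tau_2$ for every $i$. Put $F_1'=F_1^{\tau_1}$ and apply Step 1 twice to obtain $\Pi_i\vee F_1'=\Pi_i\vee F_2'$. For ($\Leftarrow$), take any $\tau_2$, pass to the coarsening $F_2^{\tau_2}$, and obtain $F_1'$ from the hypothesis. Realize $F_1'$ by a strategy $\tau_1$ using Step 2, and conclude by Step 1.

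There is no real obstacle here. The only point needing care is the bookkeeping in Step 2: the signal labels must be injective across atoms, which is possible because $S$ is infinite and $\Omega$ is finite, and the join with a function must be read as the join with the partition into its level sets. I would state both conventions explicitly.
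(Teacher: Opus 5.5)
Your proof is correct and takes essentially the paper's approach: the paper notes only that every $F_j$-measurable deterministic strategy induces a coarsening of $F_j$ and vice versa, so the observation restates the definition of JMI. Your Steps 1--3 spell out this correspondence, including reading the join as a join with the level-set partition and choosing injective labels from the countably infinite $S$.
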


Note that Observation \ref{Observation - JMI through the premitives} follows directly from Definition \ref{Definition - jointly More Informative} because every $F_i$-measurable deterministic strategy $\tau_i$ induces a coarsening $F_i'$ of $F_i$ and vice versa.
Nevertheless, what should be clear is that the notion of JMI differs from the notion of refinement, as the following example illustrates.

\begin{example} Jointly More Informative versus refinement. \label{ex: more informative vs refinement} \end{example} 

The ordering generated by the notion of ``jointly More Informative than" need not coincide with the notion of ``finer than". Consider, for example, two players with the same partition $\Pi_1=\Pi_2=\{\{\o_1,\o_2\},\{\o_3,\o_4\}\}$, and oracle partitions $F_1 = \{\{\o_1,\o_2,\o_3\},\{\o_4\}\}$ and $F_2 = \{\{\o_1,\o_2\},\{\o_3\},\{\o_4\}\}$. Note that $F_2$ strictly refines $F_1$ and each player's partition, but Oracle $1$ remains jointly More Informative than Oracle $2$. This is illustrated in Figure \ref{fig:JMI is not finer than}. Every strategy of Oracle~2 either separates $\omega_3$ from $\omega_4$ or leaves them indistinguishable; Oracle~1 matches these cases by fully disclosing its information or withholding it, respectively. The reverse JMI relation follows from refinement. Thus, the two oracles are equivalent despite having different partitions. Nevertheless, in Section \ref{Section - Two-sided JMI implies equivalence} we prove that if $F_1$ is JMI than $F_2$ and vice versa, then the two partitions coincide within every CKC.

\begin{figure}[htbp]
\centering

{\normalsize Jointly More Informative versus Refinement}

\medskip

\begin{tikzpicture}[scale=0.8] 

\draw[thick] (0,0) rectangle (8,6);

\node at (0.3,5.7) {$\Omega$};

\node[blue] at (4,5.65) {$\Pi_1=\Pi_2$};

\node[red] at (6.3,3) {$F_2$};

\node[teal] at (3.75,1.5) {$F_1$};

\draw[blue, thick] (2,3) ellipse (1.5 and 2.3);

\draw[red, thick] (2,3) ellipse (1.2 and 2);

\draw[red, thick] (5.5,4) ellipse (0.8 and 0.8);
\draw[red, thick] (5.5,2) ellipse (0.8 and 0.8);
\draw[teal, thick] (5.5,2) ellipse (0.6 and 0.6);

\draw[blue, thick] (5.5,3) ellipse (1.5 and 2.3);

\filldraw[black] (2,4) circle (2pt) node[anchor=west] {$\o_1$};
\filldraw[black] (2,2) circle (2pt) node[anchor=west] {$\o_2$};
\filldraw[black] (5.2,4) circle (2pt) node[anchor=west] {$\o_3$};
\filldraw[black] (5.2,2) circle (2pt) node[anchor=west] {$\o_4$};

\draw[teal, thick] (1.5,4.5) -- (6,4.5) -- (6,3.5) -- (2.8,1.5) -- (1.5,1.5) -- cycle;

\end{tikzpicture}
\caption{\footnotesize The notion of ``jointly More Informative than" does not imply ``finer than", though the latter does imply the former. In this figure, $F_2$ (red) strictly refines $F_1$ (green) and $\Pi_1=\Pi_2$ (blue), but for every deterministic $\tau_2$, there exists a deterministic $\tau_1$ such that $\Pi_i\vee \tau_1 = \Pi_i\vee \tau_2$ for every player $i$, so $F_1$ is jointly More Informative than $F_2$.} \label{fig:JMI is not finer than}
\end{figure}

One can also bridge the gap between the notions of JMI and refinement by considering the possibility that the players' partitions are not fixed.\footnote{This resembles the condition of strong Blackwell dominance, in the context of decision problems, in \cite{Brooks2024}.} In other words, we can also consider the possibility that Oracle $1$ is JMI than Oracle $2$ for \emph{any} set of the players' partitions. Once we account for all possible partitions, we must also account for the trivial partition, so that Oracle $1$ must match any deterministic strategy of Oracle $2$. This implies that $F_1$ refines $F_2$, at least weakly. Conversely, refinement allows Oracle~1 to reproduce every signaling strategy of Oracle~2 for any profile of players' partitions.

\begin{example}Simultaneous posterior matching vs posterior matching. \label{ex: simultaneous matching}
\end{example} 

The JMI condition differs from individually matching the players' posteriors. To see this, consider a state space $\Omega=\{\o_1,\o_2,\o_3\}$, and two players with partitions $\Pi_1=\{\{\o_1,\o_2\},\{\o_3\}\}$ and $\Pi_2=\{\{\o_1\},\{\o_2,\o_3\}\}$. The oracles' partitions are $F_1=\{\{\o_1,\o_3\},\{\o_2\}\}$ and $F_2=\{\{\o_1,\o_2\},\{\o_3\}\}$. Now consider full disclosure by Oracle~2. It provides no additional information to player~1, but makes player~2 fully informed. Oracle~1 can match the information of either player separately: it can withhold its information for player~1, or fully disclose its information for player~2. These are its only two informationally distinct strategies, and neither matches both players simultaneously.

\begin{table}[htbp]
\centering
\begin{tabular}{@{}lcc@{}}
\hline
Public information & Player 1 & Player 2\\
\hline
Full disclosure by Oracle 2 & Unchanged & Fully informed\\
No disclosure by Oracle 1 & Unchanged & Unchanged\\
Full disclosure by Oracle 1 & Fully informed & Fully informed\\
\hline
\end{tabular}
\caption{\footnotesize Separate matching does not imply simultaneous matching.}
\label{tab:simultaneous-matching}
\end{table}

Thus, Oracle~1 can match everything Oracle~2 provides to either player considered separately, but it is not JMI than Oracle~2. The difficulty is that the same public signaling strategy must match the information of all players.

\subsection{The characterization} \label{Section - First characterization result - deterministic oracles}

Our main result, given in Theorem \ref{Theorem - dominance = informativeness} below, presents an equivalence between oracle dominance and the notion of being jointly More Informative. Specifically, we prove that one oracle dominates another if and only if it is jointly More Informative.\footnote{The proofs are deferred to the Appendix.}

\begin{theorem} \label{Theorem - dominance = informativeness}
    Assume that oracles are deterministic.
    Then, \emph{Oracle $1$} dominates \emph{Oracle $2$} if and only if \emph{Oracle $1$} is jointly More Informative than \emph{Oracle $2$}.
\end{theorem}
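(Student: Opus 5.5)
The argument has two directions; sufficiency is essentially immediate and the substance lies in necessity.

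\emph{JMI implies dominance.} Fix $\tau_2$ and take the $\tau_1$ supplied by Definition~\ref{Definition - jointly More Informative}, so that $\Pi_i\vee\tau_1=\Pi_i\vee\tau_2$ for every $i$. The guided games $G(\tau_1)$ and $G(\tau_2)$ then share the prior, the action sets, the utilities and the updated partitions. Only the labels of the signals may differ. Given an equilibrium $\sigma$ of $G(\tau_2)$, I would define $\sigma'_i(\Pi_i(\omega),\tau_1(\omega))=\sigma_i(\Pi_i(\omega),\tau_2(\omega))$. This is well defined because the atom $[\Pi_i\vee\tau_1](\omega)=[\Pi_i\vee\tau_2](\omega)$ is identified by either pair. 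The profile $\sigma'$ induces the same distribution on $\Omega\times A$, and it satisfies the same best-response inequalities atom by atom. The reverse inclusion is symmetric, so $\mathrm{D}(G(\tau_1))=\mathrm{D}(G(\tau_2))$ for every $G$. One subtlety is that $\tau_1$ is chosen before $G$, which is exactly what JMI delivers.

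\emph{Dominance implies JMI.} Fix $\tau_2$ and let $\tau_1$ be given by dominance. I would design a single test game $G^*$ whose equilibrium distributions reveal each player's updated partition. Each player $i$ chooses an action $a_i\in\Omega$, a report of the state, and is paid by a strictly proper scoring rule on their own guess, independent of the others. A simple choice is $u_i(\omega,a)=\mathbf{1}[a_i=\omega]$ plus a tie-breaking refinement. More robustly, let $i$ report a probability vector from a fine finite grid containing all posteriors $\mu(\cdot\mid E)$ for events $E$, and pay a quadratic score, so that the unique best response at atom $P$ of $\Pi_i\vee\tau$ is the report $\mu(\cdot\mid P)$. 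Because payoffs do not depend on others' actions, the equilibrium of $G^*(\tau)$ is unique. In it, player $i$ at state $\omega$ reports $\mu(\cdot\mid[\Pi_i\vee\tau](\omega))$, and the induced distribution on $\Omega\times A$ is the point-mass map $\omega\mapsto(\omega,(\mu^i_{\tau|\omega})_i)$ weighted by $\mu$.

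Equality of the unique equilibrium distributions under $\tau_1$ and $\tau_2$ then gives, for every $\omega$ (full support) and every $i$, $\mu(\cdot\mid[\Pi_i\vee\tau_1](\omega))=\mu(\cdot\mid[\Pi_i\vee\tau_2](\omega))$. Since $\mu$ has full support, the support of $\mu(\cdot\mid P)$ is exactly $P$. Hence $[\Pi_i\vee\tau_1](\omega)=[\Pi_i\vee\tau_2](\omega)$ for all $\omega$, so the partitions coincide and JMI holds.

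\emph{Main obstacle.} The delicate step is building $G^*$ with finite action sets such that best responses are unique and injective in the atom. A quadratic score over a finite set of candidate posteriors works: the expected score of reporting $q$ under belief $p$ is $-\|q-p\|^2+\text{const}$. So if the menu contains $p$ itself, the unique maximizer is $p$, and distinct atoms have distinct posteriors because their supports differ. Alternatively, one can avoid scoring rules with $A_i=2^{\Omega}$ and a payoff rewarding the report of the exact atom. I would present the scoring-rule version and note that the menu $\{\mu(\cdot\mid E):E\subseteq\Omega\}$ is finite.
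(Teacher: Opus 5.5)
Your proof is correct. The sufficiency direction matches the paper's; you additionally spell out the relabeling of signals that the paper leaves implicit.

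For necessity you take a genuinely different route.

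\emph{The paper's route.} It argues by contrapositive. It fixes a $\tau_2$ that cannot be matched and builds a game tailored to the target partitions $\Pi_i\vee\tau_2$: each player ranks the states of a target block by a permutation, with a huge penalty for choosing a block that does not contain the realized state. It then separates two cases, namely that $\Pi_i\vee\tau_1$ strictly refines $\Pi_i\vee\tau_2$, or that some atom of $\Pi_i\vee\tau_1$ meets two target atoms. In each case it shows that any mismatch changes player $i$'s equilibrium expected payoff.

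\emph{Your route.} You argue directly with a single test game $G^*$ that depends neither on $\tau_2$ nor on the oracles. A quadratic scoring rule over the finite menu $\{\mu(\cdot\mid E)\}$ makes the equilibrium action distribution a faithful encoding of each player's updated partition. Full support then turns equality of posteriors into equality of atoms.

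\emph{What each buys.} Your argument is shorter and avoids both the case analysis and the calibration of the penalty. It also shows that one universal game already detects every mismatch. Because $\mathrm{D}(G^*(\tau))$ is a singleton, it yields the inclusion and disjointness variants mentioned after the theorem as well. The paper's construction gives something your game does not: it separates the oracles through equilibrium payoffs alone. In your game, two non-nested updated partitions can yield the same expected quadratic score. So a mismatch there shows up in the joint law of states and actions, but not necessarily in payoffs.

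\emph{One small precision.} Equilibrium strategies in $G^*(\tau)$ are unique only up to behavior at pairs $(\Pi_i(\omega),s)$ of probability zero. What is unique is the induced distribution on $\Omega\times A$, and that is all your argument uses.
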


The proof constructs a game from separate decision problems, one for each player. Any failure to match a player's posterior partition changes that player's equilibrium expected payoff. Thus, the game separates the target strategy from every strategy that fails to match all players simultaneously. Note that the same characterization holds if equality in Definition~\ref{Definition - Strategic Dominance} is replaced by $\rm{D}(G(\tau_2))\subseteq\rm{D}(G(\tau_1))$. Indeed, when JMI fails, the proof constructs a game for which the corresponding equilibrium-distribution sets are disjoint.

\subsection{A constructive matching criterion}\label{Section - constructive matching}

The JMI condition requires a matching strategy for each strategy of Oracle~2. We now give a direct way to determine whether such a strategy exists. Fix $\tau_2$, and define
\[ Q(\tau_2)=F_1\wedge\bigwedge_{i\in N}(\Pi_i\vee\tau_2). \]
Recall that the meet is the finest common coarsening. The partition $Q(\tau_2)$ is obtained by identifying states in the same atom of $F_1$ or of any $\Pi_i\vee\tau_2$, and taking the transitive closure of these identifications. Any matching strategy must respect these identifications: it cannot distinguish states that Oracle~1 cannot distinguish, or give a player information beyond that provided by $\tau_2$. The following proposition shows that it suffices to check whether $Q(\tau_2)$ itself is a matching strategy.

\begin{proposition}\label{Proposition - constructive matching}
There exists an $F_1$-measurable deterministic strategy $\tau_1$ such that $\Pi_i\vee\tau_1=\Pi_i\vee\tau_2$ for every player $i$ if and only if $\Pi_i\vee Q(\tau_2)=\Pi_i\vee\tau_2$ for every player $i$. Whenever this condition holds, $Q(\tau_2)$ itself is a matching strategy.
\end{proposition}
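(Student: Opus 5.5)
The approach is to prove the two directions separately, using only lattice properties of partitions: joins, meets, and coarsenings.

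The \emph{if} direction is immediate once we check measurability. By definition of the meet, $Q(\tau_2)=F_1\wedge\bigwedge_{i}(\Pi_i\vee\tau_2)$ is a coarsening of $F_1$. Hence the map sending each atom of $F_1$ to the label of the atom of $Q(\tau_2)$ containing it is an $F_1$-measurable deterministic strategy. Its induced partition is exactly $Q(\tau_2)$. So if $\Pi_i\vee Q(\tau_2)=\Pi_i\vee\tau_2$ for all $i$, this strategy is a matching strategy. This also yields the final sentence of the proposition.

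For the \emph{only if} direction, suppose some $F_1$-measurable $\tau_1$ satisfies $\Pi_i\vee\tau_1=\Pi_i\vee\tau_2$ for every $i$. The key step is to show that $\tau_1$ (viewed as a partition) is a coarsening of $Q(\tau_2)$, i.e., $Q(\tau_2)$ refines $\tau_1$. To see this, first note that $\tau_1$ is coarser than $F_1$ because $\tau_1$ is $F_1$-measurable. Next, for each $i$, $\tau_1$ is coarser than $\Pi_i\vee\tau_1=\Pi_i\vee\tau_2$. So $\tau_1$ is a common coarsening of $F_1$ and of every $\Pi_i\vee\tau_2$. Since the meet is the finest common coarsening, $\tau_1$ is coarser than $Q(\tau_2)$. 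Equivalently, the identifications generating $Q(\tau_2)$ (same $F_1$-atom, or same $(\Pi_i\vee\tau_2)$-atom) are all respected by $\tau_1$, and so is their transitive closure.

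Next, sandwich $\Pi_i\vee Q(\tau_2)$. Since $Q(\tau_2)$ refines $\tau_1$, we have
\[
\Pi_i\vee Q(\tau_2)\ \text{refines}\ \Pi_i\vee\tau_1=\Pi_i\vee\tau_2 .
\]
Conversely, $Q(\tau_2)$ is coarser than $\Pi_i\vee\tau_2$, and $\Pi_i$ is trivially coarser than it as well. Hence $\Pi_i\vee Q(\tau_2)$ is coarser than $\Pi_i\vee\tau_2$, because the join of two coarsenings of a partition is again a coarsening of it. Combining the two relations gives $\Pi_i\vee Q(\tau_2)=\Pi_i\vee\tau_2$ for every $i$.

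The main point needing care is the claim that any matching $\tau_1$ must be coarser than $Q(\tau_2)$. This rests on the observation that $\tau_1$ is coarser than $\Pi_i\vee\tau_1$, together with the universal property of the meet. Everything else is monotonicity of the join.
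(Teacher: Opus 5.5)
Your proof is correct and follows essentially the same route as the paper. Both arguments use three facts: $Q(\tau_2)$ is a coarsening of $F_1$ and hence an admissible strategy; any matching $\tau_1$ is a common coarsening of $F_1$ and of every $\Pi_i\vee\tau_2$, hence coarser than the meet $Q(\tau_2)$; and $\Pi_i\vee Q(\tau_2)$ is sandwiched between $\Pi_i\vee\tau_1$ and $\Pi_i\vee\tau_2$ by monotonicity of the join.
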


Oracle~1 must send the same signal at states in the same atom of $F_1$. It must also send the same signal at states that a player cannot distinguish under $\tau_2$, otherwise it would provide that player with additional information. If these requirements, taken together, identify two states in the same atom of some $\Pi_i$ that $\tau_2$ separates, matching is impossible. Otherwise, Oracle~1 can assign a different signal to each atom of $Q(\tau_2)$.

In Example~\ref{ex: simultaneous matching}, Oracle~1's partition identifies $\omega_1$ with $\omega_3$, while preserving player~1's information under full disclosure by Oracle~2 requires identifying $\omega_1$ with $\omega_2$. Thus, $Q(\tau_2)=\{\Omega\}$, which cannot make player~2 fully informed.

Any deterministic strategy can be obtained by combining binary disclosures, each separating one of its atoms from its complement. The following corollary shows that matching these disclosures is enough, since their matching strategies can also be combined into a single $F_1$-measurable strategy.

\begin{corollary}\label{Corollary - binary tests}
Oracle~1 is JMI than Oracle~2 if and only if it can simultaneously match every player's posterior partition under every strategy of Oracle~2 that uses at most two signals. The matching strategies of Oracle~1 need not be binary.
\end{corollary}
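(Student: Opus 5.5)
The "only if" direction is immediate, since every strategy using at most two signals is a deterministic strategy of Oracle~2. For the converse, fix an arbitrary deterministic $\tau_2$ with atoms $B_1,\dots,B_m$ (a coarsening of $F_2$). For each $k$, the binary strategy $\beta_k$ separating $B_k$ from $\Omega\setminus B_k$ is $F_2$-measurable, and the join satisfies $\bigvee_k \beta_k=\tau_2$. By hypothesis, for each $k$ there is an $F_1$-measurable $\tau_1^k$ with $\Pi_i\vee\tau_1^k=\Pi_i\vee\beta_k$ for all $i$. I will take $\tau_1=\bigvee_k\tau_1^k$. Since each $\tau_1^k$ is a coarsening of $F_1$, so is their join, so $\tau_1$ is $F_1$-measurable. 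Equivalently, $\tau_1(\omega)$ is the vector $(\tau_1^1(\omega),\dots,\tau_1^m(\omega))$, encoded into the countable set $S$.

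The key step is the lattice identity
\[
\Pi_i\vee\tau_1=\Pi_i\vee\bigvee_k\tau_1^k=\bigvee_k(\Pi_i\vee\tau_1^k)=\bigvee_k(\Pi_i\vee\beta_k)=\Pi_i\vee\bigvee_k\beta_k=\Pi_i\vee\tau_2 .
\]
This uses only that the join of partitions is associative, commutative and idempotent, so $\Pi_i$ can be distributed over the join by repeating it. I will verify this atomwise: two states share an atom of $\Pi_i\vee\bigvee_k P_k$ iff they share an atom of $\Pi_i$ and of every $P_k$, iff they share an atom of every $\Pi_i\vee P_k$. I will also check $\bigvee_k\beta_k=\tau_2$ atomwise. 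Two states lie in the same atom of every $\beta_k$ iff, for each $k$, both or neither lie in $B_k$; since the $B_k$ partition $\Omega$, this holds iff they lie in the same $B_k$.

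Degenerate cases are harmless. If $\tau_2$ uses one signal, it is itself binary. The final remark, that the matching strategies of Oracle~1 need not be binary, holds because the join of binary matchers may have many atoms, and a single binary matcher need not exist. I would illustrate the latter with a small example or simply note it as the reason for the caveat.

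I do not expect a real obstacle. The only point needing care is the idempotence step, namely that distributing $\Pi_i$ across the join is valid for partitions. Joins, unlike meets, are just intersections of equivalence relations, so this step goes through. I would not invoke Proposition~\ref{Proposition - constructive matching}, since the join argument is direct.
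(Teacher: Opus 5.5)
Your proposal is correct and follows the paper's proof essentially step for step: decompose $\tau_2$ into the binary strategies $\{E_j,E_j^c\}$, take the join of the corresponding matchers (which is still $F_1$-measurable), and distribute $\Pi_i$ over the join. Your atomwise checks of $\Pi_i\vee\bigvee_k P_k=\bigvee_k(\Pi_i\vee P_k)$ and $\bigvee_k\beta_k=\tau_2$ simply make explicit what the paper uses implicitly.
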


This reduces the comparison to binary strategies of Oracle~2, with each one checked by Proposition~\ref{Proposition - constructive matching}.

These results are related to the information loops studied in \citet{Lagziel2026}. In both settings, the oracle's measurability restriction may prevent locally feasible information from being provided simultaneously. In the deterministic case, the restriction takes the form of forced identifications: a chain of states that must receive the same signal may connect two states that a player's target information requires to distinguish. Proposition~\ref{Proposition - constructive matching} identifies all such chains, and Corollary~\ref{Corollary - binary tests} shows that binary disclosures suffice to detect any obstruction. Unlike the stochastic loop conditions, however, these requirements concern distinctions between states rather than their relative likelihoods.

\subsection{JMI and refinement within CKCs} \label{Section - JMI and refinement within CKCs}

Example~\ref{ex: more informative vs refinement} shows that JMI does not imply partition refinement. This raises the question of whether JMI instead corresponds to refinement within every CKC, namely refinement after restricting the partitions to each component. The following two examples show that neither implication holds.

\begin{example}[JMI does not imply refinement in every CKC]\label{Example: JMI does not imply refinement} \end{example} 
To see that JMI does not imply refinement in every CKC, consider the information structure described in Figure~\ref{fig:unique-CKC-counterexample}. There are four states and three players, with partitions
\[
\begin{aligned}
\Pi_1&=\{\{\omega_1,\omega_4\},\{\omega_2\},\{\omega_3\}\},\\
\Pi_2&=\{\{\omega_1\},\{\omega_2\},\{\omega_3,\omega_4\}\},\\
\Pi_3&=\{\{\omega_1\},\{\omega_2,\omega_3\},\{\omega_4\}\}.
\end{aligned}
\]
There is a unique CKC. The oracles' partitions are $F_1=\{\{\omega_1,\omega_2\},\{\omega_3\},\{\omega_4\}\}$ and $F_2=\{\{\omega_1,\omega_3\},\{\omega_2,\omega_4\}\}$.

Both oracles can either withhold all information or fully disclose their information, thereby ensuring that all players become fully informed of the realized state in the latter case. These are all the possible experiments of Oracle~2. On the other hand, Oracle~1 can also signal the partition $F_1'=\{\{\omega_1,\omega_2,\omega_3\},\{\omega_4\}\}$, which provides complete information to players~1 and~2 but provides no information to player~3. Thus, Oracle~1 is JMI than Oracle~2, while neither of the two partitions is finer than the other. The converse JMI relation does not hold, because Oracle~2 cannot replicate $F_1'$.

\begin{figure}[htbp]
\centering
\begin{minipage}[t]{.32\textwidth}
\centering
\begin{tikzpicture}[scale=.70]

\draw[thick] (0,0) rectangle (6,6);

\node at (0.3,5.7) {$\Omega$};

\node[blue] at (1.8,5.6) {$\Pi_1$};
\node[red] at (3.8,5.6) {$\Pi_2$};
\draw[red, thick] (1.5,4.5) ellipse (0.8 and 0.8);
\draw[blue, thick] (1.5,1.5) ellipse (0.6 and 0.6);
\draw[red, thick] (1.5,1.5) ellipse (0.8 and 0.8);
\draw[red, thick] (4.5,3) ellipse (1 and 2.5);
\draw[blue, thick] (4.5,4.5) ellipse (0.6 and 0.6);
\draw[blue, thick, rotate around={135:(3,3)}] (3,3) ellipse (3.2 and 1.2);

\filldraw[black] (1.4,4.5) circle (2pt) node[anchor=west] {$\o_1$};
\filldraw[black] (1.4,1.5) circle (2pt) node[anchor=west] {$\o_2$};
\filldraw[black] (4.4,4.5) circle (2pt) node[anchor=west] {$\o_3$};
\filldraw[black] (4.4,1.5) circle (2pt) node[anchor=west] {$\o_4$};

\end{tikzpicture}
\par (a) $\Pi_1$ and $\Pi_2$
\end{minipage}
\hfill
\begin{minipage}[t]{.32\textwidth}
\centering
\begin{tikzpicture}[scale=.70]

\draw[thick] (0,0) rectangle (6,6);

\node at (0.3,5.7) {$\Omega$};

\node[black] at (5.3,5.6) {$\Pi_3$};

\draw[black, thick] (1.5,4.5) ellipse (0.6 and 0.6);
\draw[black, thick] (4.5,1.5) ellipse (0.6 and 0.6);
\draw[black, thick, rotate around={45:(3,3)}] (3,3) ellipse (3.2 and 1.2);

\filldraw[black] (1.4,4.5) circle (2pt) node[anchor=west] {$\o_1$};
\filldraw[black] (1.4,1.5) circle (2pt) node[anchor=west] {$\o_2$};
\filldraw[black] (4.4,4.5) circle (2pt) node[anchor=west] {$\o_3$};
\filldraw[black] (4.4,1.5) circle (2pt) node[anchor=west] {$\o_4$};

\end{tikzpicture}
\par (b) $\Pi_3$
\end{minipage}
\hfill
\begin{minipage}[t]{.32\textwidth}
\centering
\begin{tikzpicture}[scale=0.70]

\draw[thick] (0,0) rectangle (6,6);
\node at (0.3,5.7) {$\Omega$};
\node[teal] at (3,5.2) {$F_2$};
\node[orange] at (2.8,3) {$F_1$};
\draw[teal, thick] (3,4.5) ellipse (2.5 and 1);
\draw[teal, thick] (3,1.5) ellipse (2.5 and 1);
\draw[orange, thick] (4.5,4.5) ellipse (0.6 and 0.6);
\draw[orange, thick] (4.5,1.5) ellipse (0.6 and 0.6);
\draw[orange, thick] (1.5,3) ellipse (1 and 2.5);

\filldraw[black] (1.5,4.5) circle (2pt) node[anchor=west] {$\o_1$};
\filldraw[black] (1.5,1.5) circle (2pt) node[anchor=west] {$\o_2$};
\filldraw[black] (4.3,4.5) circle (2pt) node[anchor=west] {$\o_3$};
\filldraw[black] (4.3,1.5) circle (2pt) node[anchor=west] {$\o_4$};

\end{tikzpicture}
\par (c) $F_1$ and $F_2$
\end{minipage}
\caption{ \footnotesize JMI does not imply refinement in a unique CKC. Panels (a) and (b) show the players' information, and panel (c) shows the oracles' information. Oracle~1 is JMI than Oracle~2, while neither oracle's partition refines the other's.}
\label{fig:unique-CKC-counterexample}
\end{figure}

\begin{example}[Refinement in every CKC does not imply JMI]\label{Example: refinement does not imply JMI}
\end{example} 
To demonstrate that refinement in every CKC does not imply JMI, consider two players whose partitions are
\[
\begin{aligned}
\Pi_1&=\{\{\omega_1,\omega_2\},\{\omega_4,\omega_5\},\{\omega_3,\omega_6\}\},\\
\Pi_2&=\{\{\omega_1,\omega_2\},\{\omega_3,\omega_4\},\{\omega_5,\omega_6\}\}.
\end{aligned}
\]
In this case, there are two CKCs, $C_1=\{\omega_1,\omega_2\}$ and $C_2=\{\omega_3,\omega_4,\omega_5,\omega_6\}$. The oracles' partitions are $F_1=\{\{\omega_1,\omega_3,\omega_4\},\{\omega_2,\omega_5,\omega_6\}\}$ and $F_2=\{\{\omega_1,\omega_2\},\{\omega_3,\omega_4\},\{\omega_5,\omega_6\}\}$, as illustrated in Figure~\ref{fig:two-CKC-counterexample}. Observe that in every CKC, $F_1$ refines $F_2$.

Now consider a completely revealing, deterministic experiment $\tau_2$ that maps the three different partition elements of $F_2$ to three different signals. Can Oracle~1 produce an experiment $\tau_1$ such that $\Pi_i\vee\tau_1=\Pi_i\vee\tau_2$ for every player $i$?

Note that under $\tau_2$, neither player can distinguish $\omega_1$ from $\omega_2$. Therefore, in order for $\tau_1$ to satisfy $\Pi_i\vee\tau_1=\Pi_i\vee\tau_2$ for every $i$, the experiment $\tau_1$ must map all $F_1$ partition elements to the same signal. Consequently, under $\tau_1$, player~1 cannot distinguish $\omega_4$ from $\omega_5$, which is achievable given $\tau_2$. We therefore conclude that Oracle~1 is not JMI than Oracle~2, even though $F_1$ refines $F_2$ in every CKC.

\begin{figure}[htbp]
\centering
\begin{tikzpicture}[scale=0.70] 

\draw[thick, black] (0,0) rectangle (8,6);

\node at (0.3,5.7) {$\Omega$};
\draw[black, densely dashed] (0.75,0.45) rectangle (3.05,5.15);
\draw[black, densely dashed] (3.25,0.45) rectangle (6.9,5.15);
\node at (1.9,5.55) {$C_1$};
\node at (5.1,5.55) {$C_2$};

\draw[orange, thick] (4,4.3) ellipse (2.8 and 1.2);
\draw[orange, thick] (4,1.67) ellipse (2.8 and 1.2);
\node[orange] at (7.4,4.3) {$F_1$};

\draw[teal, thick] (2.3,3) ellipse (1 and 2);
\draw[teal, thick] (4.9,4.3) ellipse (1.5 and 0.8);
\draw[teal, thick] (4.9,1.67) ellipse (1.5 and 0.8);
\node[teal] at (0.9,3) {$F_2$};

\filldraw[black] (2,4.3) circle (2pt) node[anchor=west] {$\o_1$};
\filldraw[black] (2,1.67) circle (2pt) node[anchor=west] {$\o_2$};
\filldraw[black] (4,4.3) circle (2pt) node[anchor=west] {$\o_3$};
\filldraw[black] (5.5,4.3) circle (2pt) node[anchor=west] {$\o_4$};
\filldraw[black] (4,1.67) circle (2pt) node[anchor=west] {$\o_5$};
\filldraw[black] (5.5,1.67) circle (2pt) node[anchor=west] {$\o_6$};

\end{tikzpicture}
\caption{ \footnotesize Refinement in every CKC does not imply JMI. The dashed rectangles mark the two CKCs. Despite $F_1$ (orange) refining $F_2$ (teal) in each component, Oracle~1 is not JMI than Oracle~2.}
\label{fig:two-CKC-counterexample}
\end{figure}

\section{Two-sided JMI implies equivalence in every CKC} \label{Section - Two-sided JMI implies equivalence}

The preceding examples show that refinement within every CKC is neither necessary nor sufficient for JMI. Nevertheless, mutual JMI is more restrictive within a CKC. For a partition $F$ and a CKC $C$, write $F|_C=\{B\cap C:B\in F,\ B\cap C\neq\varnothing\}$. Restricting the model to $C$ means using these restricted partitions, the conditional prior $\mu(\cdot\mid C)$, and all coarsenings of the restricted oracle partitions.

The following theorem provides this equivalence by stating that, given a specific CKC, both restricted oracles dominate each other if and only if their partitions coincide.

\begin{theorem} \label{Theorem: dual JMI implies equivalence}
Fix a \emph{CKC} $C$ and restrict the model to it. Within this \emph{CKC}, each oracle is \emph{JMI} than the other if and only if $F_1|_C=F_2|_C$.
\end{theorem}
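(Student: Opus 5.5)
The ``if'' direction is immediate: if $F_1|_C=F_2|_C$, the two restricted oracles have the same coarsenings, so each matches every strategy of the other by copying it, and by Observation~\ref{Observation - JMI through the premitives} each is JMI than the other. For the ``only if'' direction I would work entirely inside $C$, so that $\bigwedge_{i}\Pi_i=\{C\}$, and assume mutual JMI.

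\emph{Step 1: equality of full-disclosure joins.} Apply JMI of Oracle~1 to full disclosure $\tau_2=F_2$. This gives a coarsening $F_1'$ of $F_1$ with $\Pi_i\vee F_1'=\Pi_i\vee F_2$ for all $i$. Symmetrically, there is a coarsening $F_2'$ of $F_2$ with $\Pi_i\vee F_2'=\Pi_i\vee F_1$. Since the join is monotone under refinement, we get
\[
\Pi_i\vee F_1\succeq\Pi_i\vee F_1'=\Pi_i\vee F_2\succeq\Pi_i\vee F_2'=\Pi_i\vee F_1,
\]
where $\succeq$ denotes ``finer than''. Hence $\Pi_i\vee F_1=\Pi_i\vee F_2$ for every $i$. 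Example~\ref{ex: simultaneous matching} shows this alone is not enough: there the joins coincide, both being discrete, within one CKC, yet $F_1\neq F_2$. So coarser strategies must also be used.

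\emph{Step 2: contradiction via binary tests.} Suppose $F_1|_C\neq F_2|_C$. By symmetry, assume there are $\omega,\omega'$ in one atom of $F_1$ but in different atoms of $F_2$. Let $B=F_2(\omega)$ and take the binary strategy $\tau_2=\{B,C\setminus B\}$, which Corollary~\ref{Corollary - binary tests} singles out as the natural test. By Proposition~\ref{Proposition - constructive matching}, JMI forces $\Pi_i\vee Q(\tau_2)=\Pi_i\vee\tau_2$ for all $i$, where $Q(\tau_2)=F_1\wedge\bigwedge_i(\Pi_i\vee\tau_2)$. Let $K$ be the atom of $Q(\tau_2)$ containing $\omega$. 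Then $K$ contains $\omega'\notin B$. The matching condition says that no atom $P$ of any $\Pi_i$ may meet both $K\cap B$ and $K\setminus B$. I would then show $K$ is a union of $\Pi_i$-atoms for every $i$, using Step~1 and $F_1$-measurability to absorb whole player atoms into $K$. Since $C$ is a single CKC, this would give $K=C$. Then $Q(\tau_2)$ is trivial and cannot reproduce $\tau_2$, because $B$ is a proper nonempty subset of $C$ and the meet of the $\Pi_i$ is $\{C\}$, so some $\Pi_i$-atom straddles $B$.

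The main obstacle is this absorption step: showing that $K$ is closed under every $\Pi_i$. What one gets directly is only closure under the cells $P\cap B$ and $P\setminus B$. If this fails, I would switch to the reverse direction and let Oracle~2 match the binary coarsening $\{F_1(\omega),C\setminus F_1(\omega)\}$. I would then combine the two forced chains of identifications, one from each direction, along a path of player atoms linking $\omega$ to $\omega'$, which exists because $C$ is a CKC. The aim is to locate a single player atom that one oracle separates and the other cannot.
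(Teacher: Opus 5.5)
Your ``if'' direction and Step~1 are correct, but the ``only if'' direction has a genuine gap. All of the work is pushed into the absorption step, and that step is equivalent to the conclusion: as you note yourself, absorption gives $K=C$ and hence an immediate contradiction with the matching condition.

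The step also cannot be derived from the ingredients you list, namely Step~1, $F_1$-measurability, and Proposition~\ref{Proposition - constructive matching} applied to Oracle~1. Example~\ref{Example: JMI does not imply refinement} satisfies all of them. Every full-disclosure join is discrete, Oracle~1 is JMI than Oracle~2, and $\omega_1,\omega_2$ share an $F_1$-atom but not an $F_2$-atom. Taking your $B=F_2(\omega_1)=\{\omega_1,\omega_3\}$ gives $\tau_2=F_2$, $Q(\tau_2)=F_1$ and $K=\{\omega_1,\omega_2\}$. This $K$ is not a union of $\Pi_1$-atoms, because $\{\omega_1,\omega_4\}\in\Pi_1$. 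So the reverse JMI relation must be applied to a coarsening of $F_1$ other than full disclosure, which Step~1 never does.

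A side point: Example~\ref{ex: simultaneous matching} does not show that Step~1 is insufficient. There $\Pi_1\vee F_2=\Pi_1$ is not discrete, while $\Pi_1\vee F_1$ is, so the joins differ. Example~\ref{Example: JMI does not imply refinement} is the correct illustration.

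Your fallback points in a workable direction, but the argument combining the two matchings is exactly what is missing. The tool you need is this: if $x,y$ lie in a common $\Pi_i$-atom and $\Pi_i\vee F'=\Pi_i\vee T$, then $F'$ separates $x,y$ if and only if $T$ does.

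The paper takes a coarsening $F_1'$ of $F_1$ matching $\{E,E^c\}$, where $E=F_2(\omega_1)$ is your $B$, and sets $P_0=F_1'(\omega_1)$ (this is your $K$ if $F_1'=Q(\tau_2)$). The tool rules out player-atom edges between $P_0\cap E$ and $P_0\setminus E$, between $P_0\cap E$ and $E\setminus P_0$, and between $P_0\setminus E$ and $(P_0\cup E)^c$. Hence any path from $\omega_1$ to $\omega_2$ contains a segment $a,c_1,\dots,c_\ell,d$ with $\ell\ge1$, $a\in P_0\cap E$, every $c_t\notin P_0\cup E$, and $d\in E\setminus P_0$.

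The paper's reverse test is $\{P_0,P_0^c\}$, built from an atom of Oracle~1's \emph{matching} strategy rather than from $F_1(\omega_1)$. Oracle~2's match must separate $a$ from $c_1$ and must not separate consecutive states among $c_1,\dots,c_\ell,d$. Yet $a,d\in E$ receive the same signal, which is a contradiction.

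Your test $\{G,G^c\}$ with $G=F_1(\omega)$ can also be made to work, but only with a second path argument. Let $F_2''$ be Oracle~2's match and $R=F_2''(\omega)$, and let $d$ be the first state of $E\setminus P_0$ on the path. The prefix from $\omega$ to $d$ avoids $P_0\setminus E$. However, matching forbids edges from $R\setminus G$ to $R\cap G$ and to $(R\cup G)^c$. So this prefix must pass through $G\setminus R\subseteq P_0\setminus E$ before reaching $d\in R\setminus G$, a contradiction. Without an argument of this kind, the ``only if'' direction is not proved.
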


In other words, the theorem asserts that the restricted partitions coincide if and only if the two restricted oracles are mutually JMI. In particular, when $\Omega$ is a single CKC, mutual dominance is equivalent to $F_1=F_2$. The componentwise statement compares the restricted models and does not assert that local equivalence is sufficient for global dominance.

\bibliographystyle{chicago}
\begingroup
\interlinepenalty=10000
\bibliography{references2}
\endgroup

\appendix

\section{Proofs}

\subsection{Proof of Theorem \ref{Theorem - dominance = informativeness}}

\begin{proof}
One derivation is straightforward.
Assume that $F_1 \succeq_{(\mu^i)_i} F_2$.
For every $\tau_2$, take $\tau_1$ such that $\Pi_i\vee \tau_1= \Pi_i\vee \tau_2$ for every player $i$.
Thus, we get $ \rm{D}(G(\tau_1))= \rm{D}(G(\tau_2))$ for every game $G$.
This holds for every strategy $\tau_2$, so $F_1 \succeq_{\rm{D}} F_2 $ as needed.

To establish the converse derivation of the theorem, we assume that Oracle $1$ is not jointly More Informative than Oracle $2$, and prove that Oracle $1$ does not  dominate Oracle $2$. Fix a strategy $\tau_2$, so that for every $\tau_1$, there exists a player $i$ such that $\Pi_i\vee \tau_1 \neq \Pi_i\vee \tau_2$. For every player $i$, we construct a decision problem using the fixed prior $\mu$ and $\Pi_i\vee\tau_2$, and combine these problems into a game in which each player's payoff depends only on the state and his own action. We give the construction and computation for player~1. Denote $\Pi_1\vee \tau_2 =\{B_1,\dots,B_k\} $ where $B_j=\{\omega^j_1,\dots,\omega^j_{|B_j|}\} \subseteq \Omega$ for every $1 \leq j \leq k$.

Consider the following decision problem. Define $P_{B_j}$ to be the set of all permutations of $B_j$, so that every element $p\in P_{B_j}$ is a function $p:B_j \to \{1,2,\dots, |B_j|\}$ where $p(\omega^j_l)$ is the location of $\omega^j_l$ according to that permutation. Let $A_1=\bigcup_{j}P_{B_j}$ be the action set of player $1$, so that player $1$ chooses a permutation $p$ over a partial set of $\Omega$. Define the following utility function, where $p=a_1$:
\begin{equation*}
    u_1(\omega^j_l,a) = u_1(p,\omega^j_l) =
    \begin{cases}
        \frac{p(\omega^j_l)}{\mu(\omega^j_l|B_j)|B_j|}, & \text{if } p \in P_{B_j}, \\
        -\frac{2^{10|\Omega|}}{\min_{\omega} \mu(\omega)}, & \text{if } p \notin P_{B_j},
    \end{cases}
\end{equation*}
where $\mu(\omega^j_l|B_j)$ is the probability of $\omega^j_l$ conditional on $B_j$, and $u_1(p,\omega)$ is shorthand for the payoff when player~1 chooses $p$. In simple terms, player $1$ needs to match his action, i.e., a permutation, to the realized state $\omega^j_l$. If the action of player $1$ is not a permutation on the states of the realized element of the partition (generated by his private information and the information that Oracle $2$ conveys), he gets an extremely low negative payoff. However, in case the action of player $1$ is a permutation on the relevant block, he receives a positive payoff based on the ordinal location of the realized state according to the chosen permutation.

Let us compare the expected payoffs of player $1$ given the additional information conveyed separately by the two oracles. Given the partition $\Pi_1\vee \tau_2$ and after $\omega$ is realized, player $1$ is informed of the relevant block $B_j$ of the partition such that $\omega \in B_j$. Thus, for every $p\in P_{B_j}$,
\[
\begin{aligned}
\mathbf{E}[u_1(p,\omega)|B_j]
&=\sum_{\omega^j_l\in B_j}\mu(\omega^j_l|B_j)u_1(p,\omega_l^j)\\
&=\sum_{\omega^j_l\in B_j}\mu(\omega^j_l|B_j)
  \frac{p(\omega^j_l)}{\mu(\omega^j_l|B_j)|B_j|}
 =\sum_{\omega^j_l\in B_j}\frac{p(\omega^j_l)}{|B_j|}
 =\frac{|B_j|+1}{2}.
\end{aligned}
\]
Note that the expected payoff is independent of the chosen permutation $p$ given that $p\in P_{B_j}$. Here and below, maximization given a partition allows the chosen action to depend on its realized atom, and the value is evaluated ex ante. Hence,
\[
\max_{p}\mathbf{E}[u_1(p,\omega)|\Pi_1\vee \tau_2] = \sum_{j=1}^k \mu(B_j) \frac{|B_j|+1}{2}.
\]

Now fix any $\tau_1$ for which $\Pi_1 \vee \tau_1 \neq \Pi_1\vee \tau_2$. There are two possible scenarios: either $\Pi_1\vee \tau_1$ is a strict refinement of $\Pi_1\vee \tau_2$, or there exists at least one block of $\Pi_1\vee \tau_1$ that intersects two disjoint blocks of $\Pi_1\vee \tau_2$.

Starting with the former, assume that $\Pi_1\vee \tau_1$ is a strict refinement of $\Pi_1\vee \tau_2$, so there exists a block $B_j^*$ that $\Pi_1\vee \tau_1$ splits into at least two separate blocks. Without loss of generality, assume that $B_1$ is such a block, and denote two non-empty unions of the resulting cells by $B_{1,1}$ and $B_{1,2}$, chosen so that $B_1=B_{1,1}\cup B_{1,2}$ and $B_{1,1}\cap B_{1,2}=\varnothing$. The player can condition his action on these two sets. Assume that for every $B_{j} \neq B_1$, player $1$ follows the same strategy as with $\Pi_1\vee \tau_2$ so that we can focus on the difference in expected payoffs given $B_1$. Evidently,
\begin{align*}
    \mathbf{E}[u_1(p,\omega)|B_{1,1}]
    &=  \sum_{\omega^1_l\in B_{1,1}} \mu(\omega^1_l|B_{1,1}) u_1(p,\omega_l^1)
      = \sum_{\omega^1_l\in B_{1,1}} \mu(\omega^1_l|B_{1,1}) \frac{p(\omega^1_l)}{\mu(\omega^1_l|B_1)|B_1|} \\
    &=  \sum_{\omega^1_l\in B_{1,1}} \mu(\omega^1_l|B_{1}) \frac{\mu(B_{1})}{\mu(B_{1,1})}\cdot  \frac{p(\omega^1_l)}{\mu(\omega^1_l|B_1)|B_1|} \\
    &=  \frac{\mu(B_{1})}{\mu(B_{1,1})|B_1|} \sum_{\omega^1_l\in B_{1,1}} p(\omega^1_l).
\end{align*}
Note that player $1$ can choose a permutation on $B_1$ which maximizes the sum of the ranks of all states in $B_{1,1}$. Assigning them the highest ranks gives $\sum_{\omega\in B_{1,1}}p(\omega)>|B_{1,1}|(|B_1|+1)/2$, since $B_{1,1}$ is a proper subset of $B_1$. Thus,
\[
\max_{p\in P_{B_1}}\mathbf{E}[u_1(p,\omega)|B_{1,1}] > \frac{\mu(B_{1})}{\mu(B_{1,1})|B_1|} |B_{1,1}| \frac{|B_1|+1}{2},
\]
and a similar computation holds for $B_{1,2}$. Therefore,
\begin{align*}
    \max_p \mathbf{E}[u_1(p,\omega)|\Pi_1 \vee \tau_1]
    &>  \sum_{j=1}^k \mu(B_j) \frac{|B_j|+1}{2} = \max_{p}\mathbf{E}[u_1(p,\omega)|\Pi_1\vee \tau_2],
\end{align*}
and player $1$ can guarantee a strictly higher expected payoff using the information conveyed through Oracle $1$ than through Oracle $2$.

Next, consider the other possibility that $\Pi_1\vee \tau_1$ is not a refinement of $\Pi_1\vee \tau_2$. This implies that there exists at least one block of $\Pi_1\vee \tau_1$ that intersects two disjoint blocks of $\Pi_1\vee \tau_2$. Denote this block by $B^*$. For any permutation chosen on $B^*$, at least one state in $B^*$ belongs to a different target block and yields the negative payoff. Its contribution to the unconditional expected payoff is at most $-2^{10|\Omega|}$. At every state with a positive payoff, $\mu(\omega)u_1(p,\omega)\leq1$. Thus, for every action $p$,
\[
\mu(B^*)\mathbf{E}[u_1(p,\omega)|B^*]
\leq |B^*|-2^{10|\Omega|}.
\]
The same bound holds for mixed actions. The contribution from states outside $B^*$ is at most $|\Omega|-|B^*|$. This suggests that the expected payoff of player $1$ given $\Pi_1\vee\tau_1$ is bounded from above by
\[
\max_p\mathbf{E}[u_1(p,\omega)|\Pi_1\vee\tau_1]
\leq |\Omega|-2^{10|\Omega|}<0,
\]
which is strictly below the expected payoff given the information transmitted through Oracle $2$.

To conclude, we have defined, for every player $i$, a decision problem independently of $\tau_1$ such that whenever $\Pi_i \vee \tau_1 \neq \Pi_i \vee \tau_2$, it follows that the expected payoff of player $i$ given $\tau_2$ differs from the player's expected payoff given $\tau_1$. Since the players' decision problems are independent, every strategy of either oracle yields a unique profile of equilibrium expected payoffs. For every $\tau_1$, at least one player's partition differs from that under $\tau_2$, so this payoff profile cannot be matched. Thus, $\rm{D}(G(\tau_2))\cap\rm{D}(G(\tau_1))=\varnothing$ for every $\tau_1$, and this concludes the proof.
    \hfill
\end{proof}

\subsection{Proof of Proposition~\ref{Proposition - constructive matching}}
\begin{proof}
Suppose a matching strategy $\tau_1$ exists. It is a coarsening of $F_1$ and of every $\Pi_i\vee\tau_2$, since $\Pi_i\vee\tau_1=\Pi_i\vee\tau_2$. Therefore, $\tau_1$ is a coarsening of $Q(\tau_2)$, and for every player $i$, $\Pi_i\vee Q(\tau_2)$ refines $\Pi_i\vee\tau_1 = \Pi_i\vee\tau_2$. At the same time, $Q(\tau_2)$ is a coarsening of $\Pi_i\vee\tau_2$, so $\Pi_i\vee\tau_2$ refines $\Pi_i\vee Q(\tau_2)$. Thus the two partitions coincide. Conversely, $Q(\tau_2)$ is a coarsening of $F_1$ and therefore an admissible strategy, so the stated equalities give a matching strategy directly.
\end{proof}

\subsection{Proof of Corollary~\ref{Corollary - binary tests}}
\begin{proof}
Necessity is immediate. For sufficiency, fix a strategy $\tau_2$ inducing a partition $\{E_1,\ldots,E_k\}$. If $k=1$, Oracle~1 can send a constant signal. Otherwise, for each $j$, consider the binary strategy $\rho_j=\{E_j,E_j^c\}$, which is a coarsening of $F_2$. By assumption, there exists a coarsening $\tau_1^j$ of $F_1$ such that $\Pi_i\vee\tau_1^j=\Pi_i\vee\rho_j$ for every $i$. The join $\tau_1=\bigvee_{j=1}^k\tau_1^j$ is also a coarsening of $F_1$, and
\[
\Pi_i\vee\tau_1
=\bigvee_{j=1}^k(\Pi_i\vee\tau_1^j)
=\bigvee_{j=1}^k(\Pi_i\vee\rho_j)
=\Pi_i\vee\tau_2
\]
for every player $i$, as needed.
\end{proof}

\subsection{Proof of Theorem \ref{Theorem: dual JMI implies equivalence}}
\begin{proof}
Fix a CKC and restrict the model to it. In this proof, $\Omega$ denotes the fixed component and $F_1,F_2$ denote the restricted partitions, so complements are taken within this component. One direction is trivial, so assume that $F_i$ is JMI than $F_{-i}$ for every $i=1,2$, and let us prove that $F_1=F_2$. Assume, to the contrary, that $F_1\neq F_2$. W.l.o.g., there exist $\o_1\neq \o_2$ such that $F_1(\o_1) = F_1(\o_2)$ whereas $F_2(\o_1) \neq F_2(\o_2)$. Consider the partition $F_2'=\{F_2(\o_1), (F_2(\o_1))^c\}$. By assumption, there exists a coarsening \(F_1'\) of \(F_1\) such that $\Pi_i \vee F_1' = \Pi_i \vee F_2'$, for every player $i$. Denote $A= F_1'(\o_1) \cap F_2(\o_1)$, $B= F_1'(\o_1) \cap (F_2(\o_1))^c$, $C= (F_1'(\o_1))^c \cap (F_2(\o_1))^c$, $D= (F_1'(\o_1))^c \cap F_2(\o_1)$, \(E=F_2(\o_1)\), and \(P_0=F_1'(\o_1)\). Since \(F_1'\) is a coarsening of \(F_1\), we have \(\o_1,\o_2\in P_0\), \(\o_1\in A\), and \(\o_2\in B\). See Figure~\ref{fig:deterministic-equivalence-proof}.

\begin{figure}[ht!]
\centering
\begin{minipage}[c]{.47\textwidth}
\centering
\begin{tikzpicture}[scale=.90]
        \draw[black] (0,0) rectangle (6,4);

        \draw[black] (3,0) -- (3,4);

        \node at (1.5,4.25) {$E$};
        \node at (4.5,4.25) {$E^c$};

        \node[purple] at (0.9,2.5) {\(A\)};
        \node[purple] at (5.1,2.5) {\(B\)};

        \draw[teal] (3,2) ellipse (2.8 and 1.6);
        \node[teal] at (1.5,1.1) {\(P_0\)};

        \filldraw[black] (2.2,2) circle (2pt) node[anchor=east] {\(\omega_1\)};
        \filldraw[black] (3.8,2) circle (2pt) node[anchor=west] {\(\omega_2\)};

        \node[purple] at (0.5,0.5) {\(D\)};
        \node[purple] at (5.5,0.5) {\(C\)};

    \end{tikzpicture}
\par (a) The two partitions
\end{minipage}\hfill
\begin{minipage}[c]{.49\textwidth}
\centering
\begin{tikzpicture}[x=1cm,y=1cm]
\node at (0,1.20) {$A$};
\node at (2.55,1.20) {$C$};
\node at (5.1,1.20) {$D$};
\draw[teal] (-.42,.1) rectangle (.42,.9);
\node[teal] at (0,-.60) {$P_0$};
\fill (0,.5) circle (2pt) node[below=3pt] {$a$};
\fill (1.55,.5) circle (2pt) node[below=3pt] {$c_1$};
\fill (3.55,.5) circle (2pt) node[below=3pt] {$c_\ell$};
\fill (5.1,.5) circle (2pt) node[below=3pt] {$d$};
\draw[blue,thick] (0,.5)--(1.55,.5);
\draw[black,densely dotted,thick] (1.55,.5)--(3.55,.5);
\draw[black,thick] (3.55,.5)--(5.1,.5);
\end{tikzpicture}
\par (b) The selected path segment
\end{minipage}
\caption{ \footnotesize The partition construction and path in the proof of Theorem~\ref{Theorem: dual JMI implies equivalence}. Here $E=F_2(\omega_1)$ and $P_0=F_1'(\omega_1)$. In (b), all intermediate states lie in $C$; only the first edge crosses $P_0$.} \label{fig:deterministic-equivalence-proof}
\end{figure}

Call two states adjacent if they belong to the same partition atom of some player.  For any two adjacent states \(\o,\o'\), equality \(\Pi_i\vee F_1'=\Pi_i\vee F_2'\) for a player \(i\) whose information cell contains both states implies that \(F_1'\) separates \(\o\) and \(\o'\) if and only if \(F_2'\) does. Consequently, there are no player-information edges between \(A\) and \(B\), between \(A\) and \(D\), or between \(B\) and \(C\).

Since the CKC is unique, choose a player-information path from \(\o_1\in A\) to \(\o_2\in B\). Before its first visit to \(B\), this path must visit \(D\): the state immediately preceding the first \(B\)-state can belong neither to \(A\) nor to \(C\). Let \(d\) be the first state of the path in \(D\), and let \(a\) be the last preceding state in \(A\). All intervening states belong to \(C\), so the path contains a segment $(a,c_1,\ldots,c_\ell,d)$, where \(a\in A\), \(c_t\in C\) for every \(t\), and \(d\in D\). Moreover, \(\ell\geq1\), since there is no edge between \(A\) and \(D\).

Now consider the binary partition $F_1''=\{P_0,P_0^c\}$. Because \(F_1'\) is a coarsening of \(F_1\), \(F_1''\) is also a coarsening of \(F_1\). By the reverse JMI relation, there exists a coarsening \(F_2''\) of \(F_2\) such that \(\Pi_i\vee F_1''=\Pi_i\vee F_2''\) for every player \(i\). Since \(a,d\in E=F_2(\o_1)\) and \(F_2''\) is a coarsening of \(F_2\), we have \(F_2''(a)=F_2''(d)\). On the other hand, \(F_1''\) separates \(a\) from \(c_1\), but does not separate any subsequent adjacent pair in the segment \(c_1,\ldots,c_\ell,d\). Equality of the joined partitions along each adjacent pair therefore implies \( F_2''(a)\neq F_2''(c_1) = \cdots=F_2''(c_\ell)=F_2''(d)\), a contradiction. Hence \(F_1\) refines \(F_2\). Interchanging the roles of the two oracles gives the reverse refinement, and therefore \(F_1=F_2\).
\end{proof}

\end{document}